\documentclass[10pt]{article}
\usepackage{fullpage, a4wide}
\usepackage[utf8]{inputenc}
\usepackage{cite}

\usepackage{amssymb}
\usepackage{amsmath}
\usepackage{graphicx}
\usepackage{enumitem}
\usepackage{algorithmic}
\usepackage[linesnumbered, ruled, boxed]{algorithm2e}

\newcommand{\Path}{\ensuremath\mathrm{Path}}
\newcommand{\Prefix}{\ensuremath\mathrm{Prefix}}
\newcommand{\Suffix}{\ensuremath\mathrm{Suffix}}
\newcommand{\Match}[1]{\ensuremath\mathrm{Match}(#1)}

\newcommand{\ceil}[1]{\left\lceil{#1}\right\rceil}

\newcommand{\AO}{\ensuremath{A_O}}
\newcommand{\AI}{\ensuremath{A_I}}

\newcommand{\RT}{\ensuremath{\mathcal{T}_{\textrm{rec}}}}
\newcommand{\PT}{\ensuremath{\mathcal{T}^P}}

\newtheorem{theorem}{Theorem}

\newtheorem{lemma}[theorem]{Lemma}

\newenvironment{proof}{

\noindent{\bf Proof:}} {\hfill$\blacksquare$

}

\bibliographystyle{abbrv}

\title{From Regular Expression Matching to Parsing}

\author{Philip Bille \\ \texttt{phbi@dtu.dk} \and Inge Li G{\o}rtz \\ \texttt{inge@dtu.dk}}

\date{}

\begin{document}
\maketitle

\begin{abstract}
Given a regular expression $R$ and a string $Q$, the regular expression parsing problem is to determine if $Q$ matches $R$ and if so, determine how it matches, e.g., by a mapping of the characters of $Q$ to the characters in $R$. Regular expression parsing makes finding matches of a regular expression even more useful by allowing us to directly extract subpatterns of the match, e.g., for extracting IP-addresses from internet traffic analysis or extracting subparts of genomes from genetic data bases. We present a new general techniques for efficiently converting a large class of algorithms that determine if a string $Q$ matches regular expression $R$ into algorithms that can construct a corresponding mapping. As a consequence, we obtain the first efficient linear space solutions for regular expression parsing. 
\end{abstract}
%\newpage

\section{Introduction}
A regular expression specifies a set of strings formed by characters combined with concatenation, union (\texttt |), and Kleene star (\texttt *) operators. For instance, \texttt{(a|(ba))*)} describes the set of strings of \texttt{a}s and \texttt{b}s, where every \texttt{b} is followed by an \texttt{a}. Regular expressions are a fundamental concept in formal language theory and a basic tool in computer science for specifying search patterns. Regular expression search appears in diverse areas such as internet traffic analysis~\cite{JMR2007, YCDLK2006, KDYCT2006}, data mining~\cite{GRS1999}, data bases~\cite{LM2001, Murata2001}, computational biology~\cite{NR2003}, and human computer interaction~\cite{KHDA2012}.

Given a regular expression $R$ and a string $Q$, the \emph{regular expression parsing problem}~\cite{kearns1991, DF2000,FC2004, NH2011, SM2012, Laurikari2000} is to determine if $Q$ matches a string in the set of strings described by $R$ and if so, determine how it matches by computing the corresponding sequence of positions of characters in $R$, i.e., the mapping of each character in $Q$ to a character in $R$ corresponding to the match. For instance, if $R = $  \texttt{(a|(ba))*)} and $Q = $ \texttt{aaba}, then $Q$ matches $R$ and $1, 1, 2, 3$ is a corresponding parse specifying that $Q[1]$ and $Q[2]$ match the first \texttt{a} in $R$, $Q[3]$ match the \texttt{b} in $R$, and $Q[4]$ match the last \texttt{a} in $R$\footnote{Another typical definition of parsing is to compute a parse tree (or a variant thereof) of the derivation of $Q$ on $R$. Our definition simplifies our presentation and it is straightforward to derive a parse tree from our parses.}. Regular expression parsing makes finding matches of a regular expression even more useful by allowing us to directly extract subpatterns of the match, e.g., for extracting IP-addresses from internet traffic analysis or extracting subparts of genomes from genetic data bases.

To state the existing bounds, let $n$ and $m$ be the length of the string and the regular expression, respectively. As a starting point consider the simpler \emph{regular expression matching problem}, that is, to determine if $Q$ matches a string in the set of strings described by $R$ (without necessarily providing a mapping from characters in $Q$ to characters in $R$). A classic textbook algorithm to matching, due to Thompson~\cite{Thomp1968}, constructs and simulates a non-deterministic finite automaton (NFA) in $O(nm)$ time and $O(m)$ space. An immediate approach to solve the parsing problem is to combine Thompson's algorithm with backtracking. To do so, we store all state-sets produced during the NFA simulation and then process these in reverse order to recover an accepting path in the NFA matching $Q$. From the path we then immediately obtain the corresponding parse of $Q$ since each transition labeled by a character uniquely corresponds to a character in $R$. This algorithm uses $O(nm)$ time and space. Hence, we achieve the same time bound as matching but increase the space by an $\Omega(n)$ factor. We can improve the time by polylogarithmic factors using faster algorithms for matching~\cite{Myers1992, Bille06, BFC2008, BT2009, BT2010}, but by a recent conditional lower bound~\cite{BI2016} we cannot hope to achieve $\Omega((nm)^{1-\varepsilon})$ time assuming the strong exponential time hypothesis. Other direct approaches to regular expression parsing~\cite{kearns1991, DF2000,FC2004, NH2011, SM2012, Laurikari2000} similarly achieve $\Theta(nm)$ time and space (ignoring polylogarithmic factors), leaving a substantial gap between linear space for matching and $\Theta(nm)$ space for parsing. The goal of this paper is to address this gap. 

\subsection{Results}
We present a new technique to efficiently extend the classic state-set transition  algorithms for matching to also solve parsing in the same time complexity while only using linear space. Specifically, we obtain the following main result based on Thompson's algorithm: 
\begin{theorem}\label{thm:main}
	Given a regular expression of length $m$ and a string of length $n$, we can solve the regular expression parsing problem in $O(nm)$ time and $O(n + m)$ space. 
\end{theorem}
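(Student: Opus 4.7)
The plan is to augment Thompson's $O(nm)$-time, $O(m)$-space matching algorithm with a space-efficient traceback in the spirit of Hirschberg's divide-and-conquer technique. The bottleneck of the naive approach is that reconstructing an accepting path from the accepting state in $S_n$ requires access to every intermediate state set $S_0,\ldots,S_n$, costing $\Theta(nm)$ space. To avoid this, I would run Thompson's forward simulation over $Q$ while maintaining, for each currently active NFA state, a single ``origin'' pointer back to a canonical state at the midpoint position $n/2$ from which that state is reachable along a matching prefix. After the forward pass, pick any accepting state in $S_n$ and follow its origin pointer to obtain a midpoint state $s^*$ lying on an accepting path. Then recurse on the two halves: find a matching sub-path on $Q[1..n/2]$ from the start state to $s^*$, and a matching sub-path on $Q[n/2+1..n]$ from $s^*$ to an accepting state. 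Concatenating the two sub-paths and extracting, for each labelled NFA transition, its associated character of $R$ yields the desired parse of $Q$.

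For the space bound, each recursive call needs only $O(m)$ working memory (state sets and origin-pointer arrays) on top of the $O(n)$ output buffer; since the recursion has depth $O(\log n)$ and each level may reuse the previous level's workspace after returning, the total space stays within $O(n+m)$. For the time bound, the top-level forward pass costs $O(nm)$, and at the next level the two halves contribute $O((n/2)m) + O((n/2)m) = O(nm)$. Summing over $\log n$ levels naively yields $O(nm\log n)$, which is a logarithmic factor too large for the claimed bound.

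The principal obstacle is therefore shaving this $\log n$ factor. I would tackle it by exploiting the hierarchical (parse-tree) structure of the Thompson NFA: once the midpoint state $s^*$ is fixed, only the NFA states consistent with ``passing through $s^*$ at position $n/2$'' can contribute to the recursive subproblems, and the relevant substructure can be pruned to reflect only the corresponding subexpressions of $R$. If one can show that the work of all restricted subproblems across all recursion levels telescopes to $O(nm)$ in total, rather than $O(nm)$ per level, the theorem follows. Proving this telescoping, while correctly handling $\epsilon$-closures, Kleene-star back-edges, and nested alternations, is the technical crux; the remaining steps (correctness of the canonical origin pointers, concatenation of sub-parses across the midpoint, and a final linear scan to emit the character mapping) are routine.
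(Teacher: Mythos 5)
There is a genuine gap, and it is exactly the one you flag as the ``technical crux'': the $\log n$ factor does not go away under your recursion, because splitting the \emph{string} at its midpoint does not shrink the \emph{automaton} for the subproblems. Fixing a midpoint state $s^*$ does not restrict the two half-problems to a smaller subexpression of $R$: an accepting path can visit essentially every state of the Thompson NFA both before and after position $n/2$ (consider $R = (a_1|a_2|\cdots|a_k)^*$, where every state is active at every position), so the ``pruning to states consistent with passing through $s^*$'' can leave all $m$ states in play on both sides. The analogy with Hirschberg breaks here: in LCS the crossing point of the middle column splits the DP grid into two sub-rectangles of complementary dimensions whose areas sum to half the original, whereas here the ``rows'' are NFA states and the midpoint state constrains nothing about which states are relevant elsewhere. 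Your recurrence therefore remains $T(n,m) = 2T(n/2,m) + O(nm) = O(nm\log n)$, and no telescoping argument of the kind you sketch can rescue it. (The forward-pass origin pointers and the $O(n+m)$ space accounting are fine; the failure is purely in the time bound.)

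The paper's fix is to make the recursion shrink the automaton rather than the string. It finds an edge separator in the parse tree of $R$ that splits the TNFA $A$ into an inner subTNFA $A_I$ and an outer subTNFA $A_O$, each with at most $\frac{2}{3}m + 8$ states, overlapping only in two boundary states. It then decomposes $Q$ (via forward and backward state-set simulations that record when the boundary states lie on an accepting path, plus a greedy labeling to break cyclic $\epsilon$-dependencies between the two subTNFAs) into an alternating sequence of substrings, each matched entirely inside $A_I$ or $A_O$, with total length about $n$. Recursing on these subproblems gives $T(n,m) = \sum_i T(n_i, \frac{2}{3}m+8) + O(nm)$ with $\sum_i n_i \approx n$, which solves to $O(nm)$ because the automaton size decreases geometrically while the total string length per level stays $n$. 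A further trick (ordering the recursion by a heavy-path decomposition of the recursion tree and emitting output transitions eagerly via position mappings) is needed to bring the space from $O(n\log m + m)$ down to $O(n+m)$. If you want to salvage your write-up, the essential change is to replace the string-midpoint split by this automaton split; the rest of your machinery (forward simulation, backpointers to a separator, concatenation of sub-parses) then has a correct analogue.
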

This is the first bound to significantly improve upon the combination of $\Theta(nm)$ time and space. The result holds on a comparison-based, pointer machine model of computation. Our techniques are sufficiently general to also handle the more recent faster state-set transition algorithms~\cite{Myers1992, BFC2008, Bille06} and we also obtain a similar space improvement for these.

\subsection{Techniques}
Our overall approach is similar in spirit to the classic divide and conquer algorithm by Hirschberg~\cite{Hirschberg1975} for computing a longest common subsequence of two strings in linear space. Let $A$ be the \emph{Thompson NFA} (TNFA) for $R$ built according to Thompson's rules~\cite{Thomp1968} (see also Fig.~\ref{fig:thompson}) with $m$ states, and let $Q$ be the string of length $n$. 

We first decompose $A$ using standard techniques into a pair of nested subTNFAs called the \emph{inner subTNFA} and the \emph{outer subTNFA}. Each have roughly at most $2/3$ of the states of $A$ and overlap in at most $2$ boundary states. We then show how to carefully simulate $A$ to decompose $Q$ into substrings corresponding to subparts of an accepting path in each of the subTNFAs. The key challenge here is to efficiently handle cyclic dependencies between the subTNFAs. From this we construct a sequence of subproblems for each of the substrings corresponding to the inner subTNFAs and a single subproblem for the outer subTNFA. We recursively solve these to construct a complete accepting path in $A$. This strategy leads to an $O(nm)$ time and $O(n\log m + m)$ space solution. We show how to tune and organize the recursion to avoid storing intermediate substrings leading to the linear space solution in Thm.~\ref{thm:main}. Finally, we show how to extend our solution to obtain linear space parsing solutions for other state-set transition algorithms. 

\section{Preliminaries}

\paragraph*{Strings}
A string Q of length $n = |Q|$ is a sequence $Q[1]\ldots Q[n]$ of $n$ characters drawn from an alphabet $\Sigma$. The string $Q[i]\ldots Q[j]$ denoted $Q[i, j]$ is called a substring of $Q$. The substrings $Q[1, i]$ and $Q[j, n]$ are the $i$th prefix and the $j$th suffix of $Q$, respectively. The string $\epsilon$ is the unique empty string of length zero.

\paragraph*{Regular Expressions}
First we briefly review the classical concepts used in the paper. For
more details see, e.g., Aho et al.~\cite{ASU1986}. We consider the set of non-empty 
regular expressions over an alphabet $\Sigma$, defined recursively as
follows. If $\alpha \in \Sigma \cup \{\epsilon\}$ then $\alpha$ is a regular expression, and if $S$ and $T$ are regular expressions then so is the \emph{concatenation}, $(S)\cdot(T)$, the \emph{union}, $(S)|(T)$, and the \emph{star}, $(S)^*$. The \emph{language} $L(R)$ generated by $R$ is defined as follows. If $\alpha \in \Sigma \cup \{\epsilon\}$, then $L(\alpha)$ is the set containing the single string $\alpha$.  If $S$ and $T$ are regular expressions, then $L(S \cdot T) = L(S)\cdot L(T)$, that is, any string formed by the concatenation of a string in $L(S)$ with a string in $L(T)$, $L(S)|L(T) = L(S) \cup L(T)$, and $L(S^*) = \bigcup_{i \geq 0} L(S)^i$, where $L(S)^0 = \{\epsilon\}$ and $L(S)^i = L(S)^{i-1} \cdot L(S)$, for $i > 0$. The \emph{parse tree} $\PT(R)$ of $R$ (not to be confused with the parse of $Q$ wrt. to $R$) is the rooted, binary tree representing the hierarchical structure of $R$.  The leaves of $\PT(R)$ are labeled by a character from $\Sigma$ or $\epsilon$ and internal nodes are labeled by either $\cdot$, $\mid$, or $*$.

\paragraph*{Finite Automata}
 A \emph{finite automaton} is a tuple $A = (V, E, \Sigma, \theta,
\phi)$, where $V$ is a set of nodes called \emph{states}, $E$ is a set of directed edges between states called \emph{transitions} either labeled $\epsilon$ (called $\epsilon$-transitions) or labeled by a character from $\Sigma$ (called character-transitions), $\theta \in V$
is a \emph{start state}, and $\phi \in V$ is an \emph{accepting state}\footnote{Sometimes NFAs are allowed a \emph{set} of accepting states, but this is not necessary for our purposes.}. In short, $A$ is
an edge-labeled directed graph with a special start and accepting node. $A$ is a \emph{deterministic finite automaton} (DFA) if $A$ does not contain any $\epsilon$-transitions, and all outgoing
transitions of any state have different labels. Otherwise, $A$ is a \emph{non-deterministic automaton} (NFA). When we deal with multiple automatons, we use a subscript $_A$ to indicate information associated with automaton $A$, e.g., $\theta_A$ is the start state of automaton $A$.

Given a string $Q$ and a path $P$ in $A$ we say that $Q$ and $P$ \emph{match} if the concatenation of the labels on the transitions in $P$ is $Q$. Given a state $s$ in $A$ we define the \emph{state-set transition} $\delta_A(s, Q)$ to be the set of states reachable from $s$ through paths matching $Q$. For a set of states $S$ we define $\delta_A(S, Q) = \bigcup_{s \in S} \delta_A(s, Q)$. We say that $A$ \emph{accepts} the string $Q$ if $\phi_A\in \delta_A(\theta_A, Q)$. Otherwise $A$ \emph{rejects} $q$. For an accepting path $P$ in $A$, we define the \emph{parse} of $P$ for $A$ to be the sequence of character transitions in $A$ on $P$. Given a string $Q$ accepted by $A$, a parse of $Q$ is a parse for $A$ of any accepting path matching $Q$.

We can use a sequence of state-set transitions to test acceptance of a string $Q$ of length $n$ by computing a sequence of state-sets $S_0, \ldots, S_n$, given by $S_0 = \delta_A(\theta_A, \epsilon)$ and $S_i = \delta_A(S_{i-1}, Q[i])$, $i=1, \ldots, n$. We have that $\phi_A \in S_n$ iff $A$ accepts $Q$. We can  extend the algorithm to also compute the parse of $Q$ for $A$ by processing the state-sets in reverse order to recover an accepting path and output the character transitions. Note that for matching we only need to store the latest two state-sets at any point to compute the final state-set $S_n$, whereas for parsing we store the full sequence of state-sets. 
\paragraph*{Thompson NFA}
\begin{figure}[t]
  \centering \includegraphics[scale=.5]{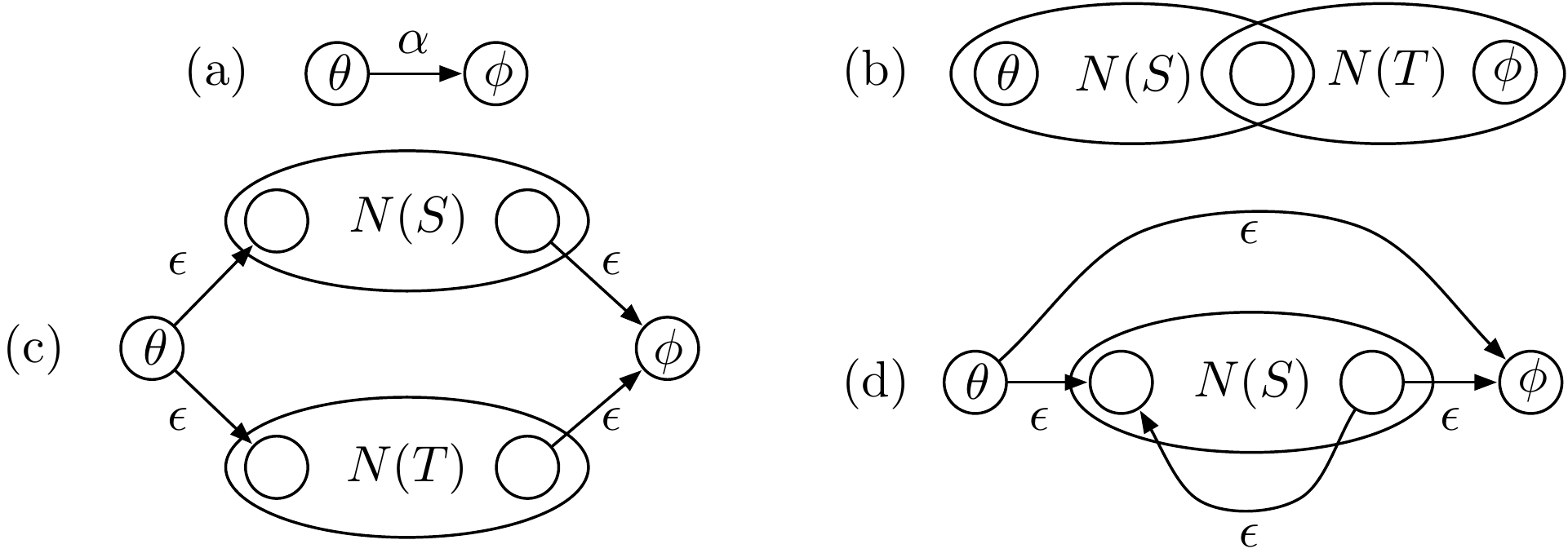}
    \caption{Thompson's recursive NFA construction. The regular
   expression $\alpha \in \Sigma \cup \{\epsilon\}$ corresponds to NFA
   $(a)$. If $S$ and $T$ are regular expressions then $N(ST)$,
   $N(S|T)$, and $N(S^*)$ correspond to NFAs $(b)$, $(c)$, and $(d)$,
   respectively.  In each of these figures, the leftmost node $\theta$
   and rightmost node $\phi$ are the start and the accept nodes,
   respectively.  For the top recursive calls, these are the start and
   accept nodes of the overall automaton. In the recursions indicated,
   e.g., for $N(ST)$ in (b), we take the start node of the
   subautomaton $N(S)$ and identify with the state immediately to the
  left of $N(S)$ in (b). Similarly the accept node of $N(S)$ is
   identified with the state immediately to the right of $N(S)$ in
   (b).}
  \label{fig:thompson}
\end{figure}

Given a regular expression $R$, we can construct an NFA accepting precisely the strings in $L(R)$ by several  classic methods~\cite{MY1960, Glushkov1961, Thomp1968}. In particular, Thompson~\cite{Thomp1968} gave the simple well-known construction shown in Fig.~\ref{fig:thompson}. We will call an NFA constructed with these rules a \emph{Thompson NFA} (TNFA). A TNFA $N(R)$ for $R$ has at most $2m$ states, at most $4m$ transitions, and can be computed in $O(m)$ time. Note that each character in $R$ corresponds to a unique character transition in $N(R)$ and hence a parse of a string $Q$ for $N(R)$ directly corresponds to a parse of $Q$ for $R$.   The parse tree of a TNFA $N(R)$ is the parse tree of $R$.  
 With a breadth-first search of $A$ we can compute a state-set transition for a single character in $O(m)$ time. By our above discussion, it follows that we can solve regular expression matching in $O(nm)$ time and $O(m)$ space, and regular expression parsing in $O(nm)$ time and $O(nm)$ space.

\paragraph*{TNFA Decomposition}
We need the following decomposition result for TFNAs (see Fig.~\ref{fig:subTNFA}). Similar decompositions are used in~\cite{Myers1992, Bille06}. Given a TNFA $A$ with $m > 2$ states, we decompose $A$ into an \emph{inner subTNFA} $A_I$ and an \emph{outer subTNFA} $A_O$. The inner subTNFA consists of a pair of \emph{boundary states} $\theta_{A_I}$ and $\phi_{A_I}$ and all states and transitions that are reachable from $\theta_{A_I}$ without going through $\phi_{A_I}$. %, but not crossing $\phi_{A_I}$. 
Furthermore, if there is a path of $\epsilon$-transitions from $\phi_{A_I}$ to $ \theta_{A_I}$ in $A_O$, we add an $\epsilon$-transition from $\phi_{A_I}$ to $ \theta_{A_I}$ in $A_I$ (following the rules from Thompson's construction). The outer subTNFA is obtained by removing all states and transitions of $A_I$ except $\theta_{A_I}$ and $\phi_{A_I}$. Between $\theta_{A_I}$ and $\phi_{A_I}$ we add a special transition labeled $\beta_{A_I} \not \in \Sigma$ and if $A_I$ accepts the empty string we also add an $\epsilon$-transition (corresponding to the regular expression $(\beta_{A_I} \mid \epsilon)$). The decomposition has the following properties. Similar results are proved  in~\cite{Myers1992, Bille06}.
\begin{figure}[t]
  \centering \includegraphics[scale=.33]{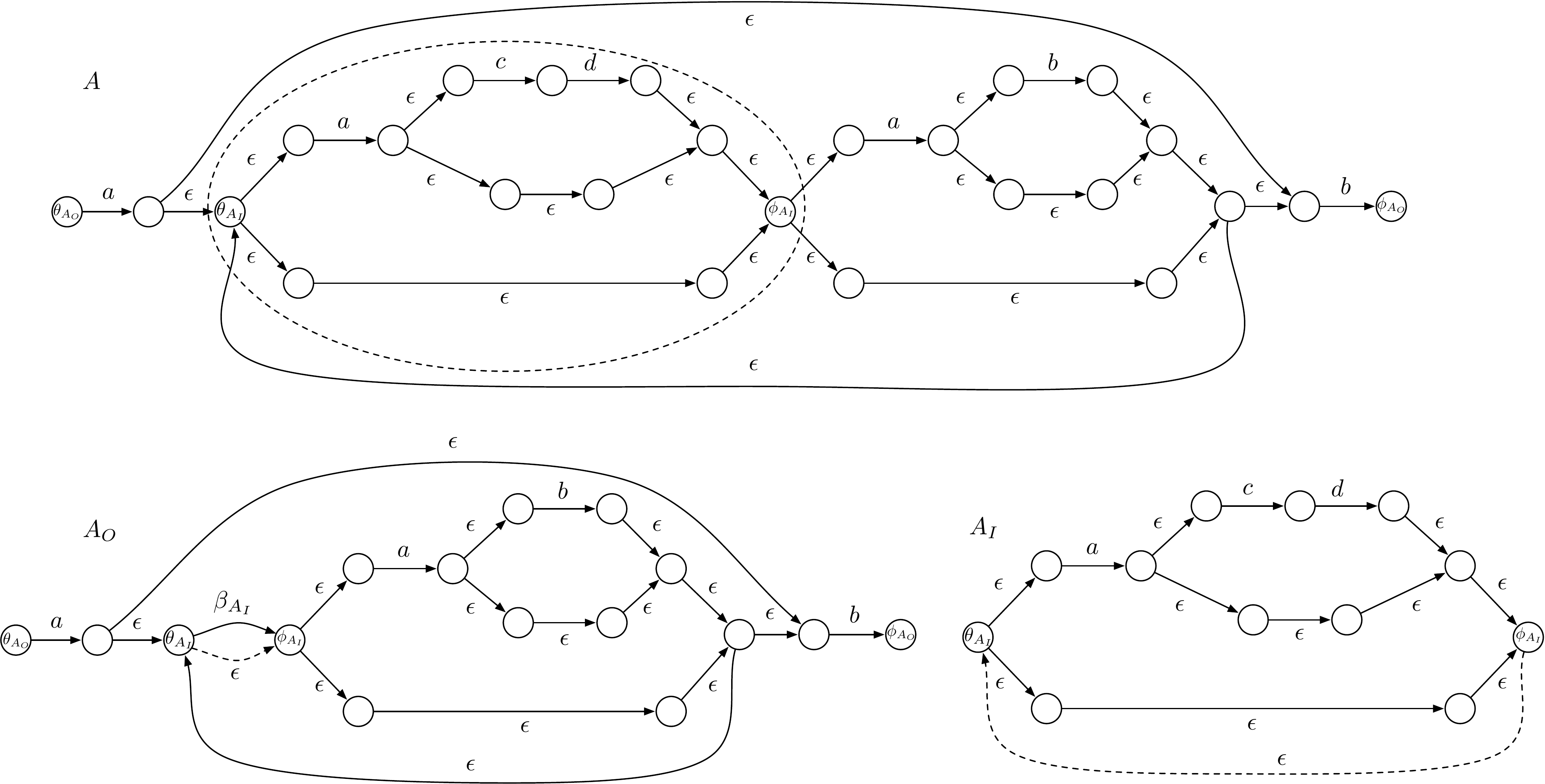}
  \caption{Decomposition of TNFA $A$ into subTNFAs $A_O$ and $A_I$. The dotted $\epsilon$-transition in $A_O$ exists since $A_I$ accepts the empty string, and the dotted $\epsilon$-transition in $A_I$ exists since there is a path of $\epsilon$-transitions from $\phi_{A_I}$ to $ \theta_{A_I}$.}\label{fig:subTNFA}
\end{figure}  

\begin{lemma}\label{lem:separator}
Let $A$ be any TNFA with $m > 2$ states. In $O(m)$ time we can decompose $A$ into inner and outer subTNFAs $\AO$ and $\AI$ such that 
\begin{itemize}
\item[(i)] $\AO$ and $\AI$ have at most $2/3m + 8$ states each, and 
\item[(ii)] any path from $\AO$ to $\AI$ crosses $\theta_{\AI}$ and any path from $\AI$ to $\AO$ crosses $\phi_{\AI}$. 
\end{itemize} 
\end{lemma}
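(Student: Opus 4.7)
The plan is to work on the parse tree $\PT(R)$ of the regular expression, which is a rooted binary tree with $O(m)$ nodes, and exploit the fact that every subtree of $\PT(R)$ corresponds to a contiguous subautomaton of $A$ with a well-defined pair of boundary states (its start and accept states in Thompson's construction). Computing $\PT(R)$ and, for every node of the tree, a pointer to the corresponding start/accept states of the associated subTNFA takes $O(m)$ time and space.

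To pick the decomposition we apply the standard centroid argument for binary trees: starting at the root of $\PT(R)$, repeatedly descend into the child of larger subtree size, and stop at the first node $v$ whose subtree has size at most $2m/3$. Since $\PT(R)$ is binary and each step decreases the subtree size by at most a factor of roughly $1/2$ (after subtracting the node itself), the subtree $T_v$ at $v$ has size at least $m/3$ (up to additive constants for small $m$). Let $A_I$ be the subTNFA associated with $T_v$, i.e., the subautomaton between the boundary states $\theta_{A_I}$ and $\phi_{A_I}$ corresponding to $v$; if there is a path of $\epsilon$-transitions from $\phi_{A_I}$ back to $\theta_{A_I}$ in the rest of $A$, add the $\epsilon$-transition from $\phi_{A_I}$ to $\theta_{A_I}$ to $A_I$ as in Thompson's construction. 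Obtain $A_O$ from $A$ by deleting all states and transitions of $A_I$ except $\theta_{A_I}$ and $\phi_{A_I}$, and connecting them by a fresh labeled transition $\beta_{A_I}$ (together with an $\epsilon$-transition if $A_I$ accepts the empty string, which can be tested in $O(m)$ time by a reachability computation). All of this takes $O(m)$ time.

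For (i), the number of TNFA states in the subautomaton corresponding to a parse-tree subtree is linear in the size of the subtree plus a constant for the two boundary states. Since the subtree at $v$ has between $m/3$ and $2m/3$ nodes, both $A_I$ and the complement have $\Theta(m)$ states and their TNFA sizes lie within $2m/3 + c$ for an absolute constant $c$; accounting for the shared boundary states and for the at most two extra transitions added in $A_O$ gives the claimed bound $2m/3 + 8$. For (ii), Thompson's construction guarantees that any subTNFA corresponding to a parse-tree subtree is entered in $A$ only through its start state and left only through its accept state; hence any path from $A_O$ to $A_I$ must cross $\theta_{A_I}$ and any path from $A_I$ to $A_O$ must cross $\phi_{A_I}$, and the added $\epsilon$-transition from $\phi_{A_I}$ to $\theta_{A_I}$ inside $A_I$ preserves this property since it stays internal to $A_I$.

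The main obstacle is the centroid step: getting both sides of the decomposition within a $2/3$ fraction requires the binary structure of $\PT(R)$, and one must verify that descending into the heavier child of the current node cannot overshoot from ``too large'' to ``too small'' by more than a constant, which is where the additive slack of $8$ in (i) absorbs boundary-state double-counting and the extra transitions introduced when building $A_O$. The rest of the argument is a direct consequence of the correspondence between parse-tree subtrees and subautomata in Thompson's construction.
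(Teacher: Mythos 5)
Your proof is correct and follows essentially the same route as the paper: both find a balanced edge separator of the binary parse tree (you construct it explicitly by descending into the heavier child, the paper just invokes its existence), take $A_I$ as the subTNFA induced by the split-off subtree with the possible back $\epsilon$-transition, replace it in $A_O$ by a $\beta_{A_I}$ transition (plus $\epsilon$ if $A_I$ accepts the empty string), and derive (ii) from the single-entry/single-exit structure of Thompson subautomata. The only cosmetic slip is that you measure the separator threshold in terms of $m$ (states) rather than the number of parse-tree nodes, which the paper keeps distinct before converting nodes to states at the end.
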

\begin{proof}
Consider the parse tree $\PT$ of $A$ with $v$ nodes. Since $\PT$ is a binary tree with more than one node we can find a separator edge $e$ in linear time whose removal splits $\PT$ into subtrees $\PT_I$ and $\PT_O$ that each have at most $2/3v + 1$ nodes. Here, $\PT_O$ is the subtree containing the root of $\PT$. The subTNFA $A_I$ is the TNFA induced by  $\PT_I$, possibly with a new union node as root with one child being the root of $\PT_I$ and the other a leaf labeled~$\epsilon$. The subTNFA $\PT_O$ is the TNFA induced by the tree $\PT_O$, where the separator edge is replaced by an edge to either a leaf labeled $\beta_{A_I}$ or to a union-node with children labeled $\beta_{A_I}$ and $\epsilon$ in the case where $\AI$ accepts the empty string. Thus, each subTNFA are induced by a tree with at most $2/3v + 4$ nodes. Since each node correspond to two states, each subTNFA has at most $2/3m + 8$ states. 
\end{proof}

\section{String Decompositions}
Let $A$ be a TNFA decomposed into subTNFAs $\AO$ and $\AI$ and $Q$ be a string accepted by $A$. We show how to efficiently decompose $Q$ into substrings corresponding to subpaths matched in each subTNFA. The algorithm will be a key component in our recursive algorithm in the next section.

Given an accepting path $P$ in $A$, we define the \emph{path decomposition} of $P$ wrt.\ $\AI$ to be the partition of $P$ into a sequence of subpaths $\mathcal{P} = \overline{p}_1, p_1, \overline{p}_2, p_2, \ldots , \overline{p}_\ell,p_\ell, \overline{p}_{\ell+1}$, where the \emph{outer subpaths}, $\overline{p}_1, \ldots, \overline{p}_{\ell+1}$, are subpaths in $\AO$ and the \emph{inner subpaths}, $p_1, \ldots, p_{\ell}$ are the subpaths in $\AI$. The \emph{string decomposition} induced by $\mathcal{P}$ is the sequence of substrings $\mathcal{Q} = \overline{q}_1, q_1, \overline{q}_2, q_2, \ldots , \overline{q}_\ell,q_\ell, \overline{q}_{\ell+1}$ formed by concatenating the labels of the corresponding subpath in $A$. A sequence of substrings is a substring decomposition wrt.\ to $\AI$ if there exists an accepting path that induces it. Our goal is to compute a string decomposition in $O(nm)$ time and $O(n + m)$ space, where $n$ is the length of $Q$ and $m$ is the number of states in $A$. 

An immediate idea would be to process $Q$ from left to right using state-set transitions and "collapse" the state set to a boundary state $b$ of $\AI$ whenever the state set contains $b$ and there is a path from $b$ to $\phi_A$ matching the rest of $Q$. Since $\AO$ and $\AI$ only interacts at the boundary states, this effectively corresponds to alternating the simulation of $A$ between $\AO$ and $\AI$. However, because of potential cyclic dependencies from paths of $\epsilon$-transition from $\phi_{\AI}$ to $\theta_{\AI}$ in $\AO$ and $\theta_{\AI}$ to $\phi_{\AI}$ in $\AI$ we cannot immediately determine which subTNFA we should proceed in and hence we cannot correctly compute the string decomposition.  
For instance, consider the string $Q = aaacdaabaacdacdaabab$ from Figure~\ref{fig:stringdecomp}. After processing the first two characters ($aa$) both $\theta_{\AI}$ and $\phi_{\AI}$ are in the state set, and there is a path from both these states to $\phi_A$ matching the rest of $Q$. The same is true after processing the first six characters ($aaacda$). In the first case the substring consisting of the next three characters ($acd$) only matches a path in $A_I$, whereas in the second case the substring consisting of the next two characters ($ab$) only matches a path in $A_O$. A technical contribution in our algorithm in the next section is to efficiently overcome these issues by a two-step approach that first decomposes the string into substrings and labels the substrings greedily to find a correct string decomposition. 

\subsection{Computing String Decompositions}
We need the following new definitions. Let $i$ be a position in $Q$ and let $s$ be a state in $A$. We say that $(i, s)$ is a \emph{valid pair} if there is a path from $\theta_A$ to $s$ matching $Q[1, i]$ and from $s$ to $\phi_A$ matching $Q[i+1, n]$. For any set of states $X$ in $A$, we say that $(i, X)$ is a valid pair if each pair $(i, x)$, $x \in X$, is a valid pair. A path in $A$ \emph{intersects} a valid pair $(i, X)$ if $(i,x)$ for some state $x \in X$ is on the path.

Our algorithm consist of the following steps. In step 1, we process $Q$ from left to right and right to left to compute and store the \emph{match sets}, consisting of all valid pairs for the boundary states $\theta_{\AI}$ and $\phi_{\AI}$. We then use the match sets in step 2 to process $Q$ from left to right to build a sequence of valid pairs for the boundary states that all intersect a single accepting path $P$ in $A$ matching $Q$, and that has the property that all positions where the accepting path $P$ contains  $\theta_{\AI}$ or  $\phi_{\AI}$ correspond to a valid pair in the sequence. Finally, in step 3 we construct the string decomposition using a greedy labeling of the sequence of valid pairs. See Figure~\ref{fig:stringdecomp} for an example of the computation in each step.

\paragraph*{Step 1: Computing Match Sets}
First, compute the \emph{match sets} given by
\begin{align*}
\qquad \qquad \Match{\theta_{\AI}} &= \{i \mid (i, \theta_{\AI}) \text{ is a valid pair}\} \\
\Match{\phi_{\AI}} &= \{i \mid (i, \phi_{\AI}) \text{ is a valid pair}\} 
\end{align*}

Thus, $\Match{\theta_{\AI}}$ and $\Match{\phi_{\AI}}$ are the positions in $Q$ that correspond to a valid pair for the  boundary states $\theta_{\AI}$ and $\phi_{\AI}$, respectively. To compute these, we first compute the \emph{prefix match sets}, $\Prefix(s)$,  and \emph{suffix match sets}, $\Suffix(s)$, for $s\in\{\theta_{\AI}, \phi_{\AI}\}$. A position $i$ is in $\Prefix(s)$ if there is a path from $\theta_A$ to $s$ accepting the prefix $Q[1,i]$, and in $\Suffix(s)$ if there is a path from $s$ to $\phi_A$ accepting the suffix $Q[i+1,n]$. To compute the prefix match sets we perform state-set transitions on $Q$  and $A$ and whenever the current state-set contains either $\theta_{\AI}$ or $\phi_{\AI}$ we add the corresponding position to $\Prefix(s)$. We compute the suffix match sets in the same way, but now we perform the state-set transitions on $Q$ from right to left and $A$ with the direction of all transitions reversed. 
Each step of the state-set transition takes $O(m)$ time and hence we use $O(nm)$ time in total. 

\begin{figure}[t]
  \centering 
  \includegraphics[scale=.7]{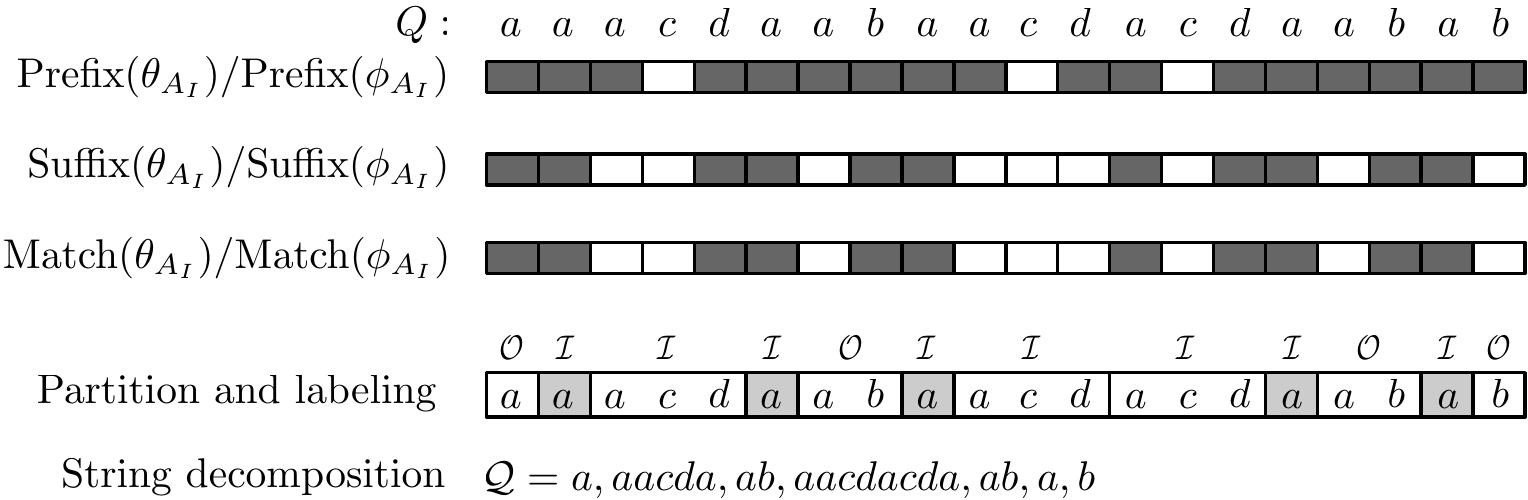}
    \caption{The string decomposition of the string $Q= aaacdaabaacdacdaabab$ wrt.\ $A_I$ in Figure~\ref{fig:subTNFA} and the corresponding suffix/prefix match sets. The dark grey blocks in the prefix, suffix and match sets are the positions contained in the sets. The blocks in the partition of the string are labeled $\mathcal{O}$ and $\mathcal{I}$ for outer and inner, respectively. The grey blocks in the partition are the substrings that can be parsed by both the inner and outer automaton. According to our procedure these blocks are labeled \emph{inner}.}
  \label{fig:stringdecomp}
\end{figure}

Finally, we compute the match sets $\Match{s}$, for $s\in\{\theta_{\AI},\phi_{\AI}\}$, by taking the intersection of $\Prefix(s)$ and $\Suffix(s)$. In total we use $O(mn)$ time and $O(n+m)$ space to compute and store the match sets. 

\paragraph*{Step 2: Computing Valid Pairs} 
We now compute a sequence of valid pairs 
$$V = (i_1, X_1), (i_2, X_2), \ldots, (i_{k}, X_{k})$$ such that $0 \leq i_1 < \cdots < i_{k} \leq n$ and $X_j \subseteq \{\theta_{\AI}, \phi_{\AI}\}$ and with the property that the states of all pairs intersect a single accepting path $P$ in $A$ and at all places where $P$ is equal to either $\theta_{\AI}$ or $\phi_{\AI}$ correspond to a valid pair in $V$.

To compute the sequence we run a slightly modified state-set transition algorithm:  %Initially, set $S_0 := \delta_A(\theta_A,\emptyset)$.
For $i = 0,1, \ldots, n$ we set $S_i = \delta_A(S_{i-1}, Q[i])$ (for $i=0$ set $S_0 := \delta_A(\theta_A,\epsilon)$) and compute the set 
$$ X:= \{x \mid x\in\{\theta_{\AI}, \phi_{\AI}\} \textrm{ and } i \in \Match{x}\} \cap S_i\;.$$ 
Thus $X$ is the set of boundary nodes in $S_i$ that corresponds to a valid pair computed in Step~1. 
If $X\neq \emptyset$ we add $(i,X)$ to the sequence $V$ and set $S_i := X$.

\medskip

We argue that this produces a sequence $V$ of valid pairs with the required properties. First note that by definition of $X$ we inductively produce state-set $S_0, \ldots, S_n$ such that $S_i$ contains the set of states reachable from $\theta_{A}$ that match $Q[1,i]$ and the paths used to reach $S_i$ intersect the states of the valid pairs produced up to this point. Furthermore, we include all positions in $V$ where $S_i$ contains $\theta_{\AI}$ or $\phi_{\AI}$. It follows that $V$ satisfies the properties. 

Each of modified state-set transition uses $O(m)$ time and hence we use $O(nm)$ time in total. The sequence $V$ uses $O(n)$ space. In addition to this we store the match sets and a constant number of active state-sets using $O(n + m)$ space.

\paragraph*{Step 3: Computing the String Decomposition} We now convert the sequence
$V = (i_1, X_1), (i_2, X_2), \ldots, (i_{k}, X_{k})$  into a string decomposition. The sequence $V$ induces a partitioning $q_0, \ldots, q_{k+1}$ of $Q$ where $q_0 = Q[1,i_1]$, $q_j = Q[i_j + 1, i_{j+1}]$, and $q_{k+1} = Q[i_k+1, n]$. Note that $q_0$ and $q_{k+1}$ may be the empty string.

We now show how to efficiently convert this partition into a string decomposition. First, we determine for each of the substrings whether it can be an inner substring, an outer substring, or both. 

\subparagraph*{Labeling}  We label the substrings as follows. First label $q_0$ and $q_{k+1}$ with outer. For the rest of the substrings, if $X_i = \{\theta_{\AI}\}$ and $X_{i+1} = \{\phi_{\AI}\}$ then label $q_i$ with inner, and if $X_i = \{\phi_{\AI}\}$ and $X_{i+1} = \{\theta_{\AI}\}$  then label $q_i$ with outer. If either $X_{i}$ and $X_{i+1}$ contains more than one boundary node then we use standard state-set transitions in $\AI$ and $\AO$ to determine if $\AI$ accepts $q_i$ or if there is a path in $\AO$ from $\phi_{\AI}$ to $\theta_{\AI}$ that matches $q_i$.  If a substring is only accepted by $\AI$ or $\AO$ it is labeled with inner or outer, respectively.
Let $q_i$ be a substring that can be both an inner or an outer substring.
We divide this into two cases. If there is an $\epsilon$-path from $\phi_{\AI}$ to $\theta_{\AI}$ then label all such $q_i$ with inner. Otherwise %(there is an $\epsilon$-path from $\theta_{\AI}$ to $\phi_{\AI}$, but not from $\phi_{\AI}$ to $\theta_{\AI}$) 
label all such $q_i$ with outer. See also Algorithm~\ref{algo:label}.

\begin{algorithm}[t]
   \DontPrintSemicolon
   \KwIn{A sequence $V$ of valid pairs $(i_1,X_1),\ldots(i_k,V_k)$ and the corresponding partition $q_0,\ldots, q_{k+1}$ of $Q$.}
   \KwOut{A labeling of the partition}
   
   The (possible empty) substrings $q_0$ and $q_{k+1}$ are labeled outer.
   
   \For{$i=1$ \KwTo $k$}{
    \uIf(\tcc*[f]{Case 1}){$X_i = \{\theta_{\AI}\}$ and $X_{i+1} = \{\phi_{\AI}\}$}{label $q_i$ inner.}
    \uElseIf(\tcc*[f]{Case 2}){$X_i = \{\phi_{\AI}\}$ and $X_{i+1} = \{\theta_{\AI}\}$}{ 
    label $q_i$ outer.}
    
    \ElseIf(\tcc*[f]{Case 3}){ $X_{i}$ or $X_{i+1}$ contains more than one boundary node}{Use standard state-set transitions in $\AI$ and $\AO$ to determine if $\AI$ accepts $q_i$ or if there is a path in $\AO$ from $\phi_{\AI}$ to $\theta_{\AI}$ that matches $q_i$.
    
    \uIf(\tcc*[f]{Subcase 3a}){$q_i$ is only accepted by $\AI$} {label $q_i$ inner}
    \uElseIf(\tcc*[f]{Subcase 3b}){$q_i$ is only accepted by $\AO$} {label $q_i$ with outer}
    
    \Else(\tcc*[h]{$q_i$ is accepted by both $\AI$ and $\AO$} ){There are two cases.
        
        \uIf(\tcc*[f]{Subcase 3c}){there is an $\epsilon$-path from $\phi_{\AI}$ to $\theta_{\AI}$}{label $q_i$ inner.}
        \lElse(\tcc*[f]{Subcase 3d}){label $q_i$ outer.}
  
        }}}    
\caption{Labeling}
\label{algo:label}
\end{algorithm}

For correctness first note that $q_0$ and $q_{k+1}$ are always (possibly empty) outer substrings. 
The cases where both $|X_i| = |X_{i+1}|$ (case 1 and 2) are correct by the correctness of the sequence of valid pairs $V$.
Due to cyclic dependencies we may have that $X_{i}$ and $X_{i+1}$ contains more than one boundary node. This can happen if there is an $\epsilon$-path from $\theta_{\AI}$ to $\phi_{\AI}$ and/or there is an $\epsilon$-path from $\phi_{\AI}$ to $\theta_{\AI}$. If a substring only is accepted by one of $\AI$ (case 3a) or $\AO$ (case 3b) then it follows from the correctness of $V$ that the labeling is correct. 
It remains to argue that the labeling in the case where $q_i$ is accepted by both $\AI$ and $\AO$ is correct.
To see why the labeling in this case is consistent with a string decomposition of the accepting path, note that if there is an $\epsilon$-path from $\phi_{\AI}$ to $\theta_{\AI}$ (case 3c), then it is safe to label $q_i$ with inner, since in the case where 
we are in $\phi_{A_I}$ after having read $q_{i-1}$  we can just follow the $\epsilon$-path to $\theta_{A_I}$ and then start reading $q_i$ from here. The argument for the other case (case 3d) is similar. 

Except for the state-set transitions in case 3 all cases takes constant time. The total time of all the state-set transitions ia $O(nm)$. The space of $V$ and the partition together with the labeling uses $O(n)$ space.

\subparagraph*{String decomposition}
Now every substring has a label that is either inner or outer. We then merge adjacent substrings that have the same label. 
This produces an alternating sequence of inner and outer substrings, which is the final string decomposition. Such an alternating subsequence must always exist since each pair in $V$ intersects an accepting path. 
 
\medskip

In summary, we have the following result. 
\begin{lemma}\label{lem:stringdecomposition}
Given string $Q$ of length $n$, and TNFA $A$ with $m$ states decomposed into $\AO$ and $\AI$, we can compute a string decomposition wrt. $\AI$ in $O(nm)$ time and $O(n+m)$ space. 
\end{lemma}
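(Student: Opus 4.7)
The plan is to establish correctness and complexity of the three-step pipeline described immediately before the lemma, and then sum the costs. For Step~1, I would compute $\Prefix(s)$ and $\Suffix(s)$ for $s \in \{\theta_{\AI}, \phi_{\AI}\}$ by one forward Thompson state-set sweep on $A$ and one backward sweep on the edge-reversed $A$, recording every position at which a boundary state joins the current state set. Each transition costs $O(m)$, so the total is $O(nm)$ time, and the four sets plus the two rolling state sets fit in $O(n+m)$ space. Taking $\Prefix(s) \cap \Suffix(s)$ yields $\Match{s}$, whose elements are exactly the $i$ such that $(i,s)$ is a valid pair, by definition.

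For Step~2, I would run the modified forward simulation that, after producing $S_i$, forms $X := S_i \cap \{s \in \{\theta_{\AI}, \phi_{\AI}\} : i \in \Match{s}\}$ and, if $X \neq \emptyset$, appends $(i, X)$ to $V$ and overwrites $S_i := X$. Correctness hinges on the observation that collapsing $S_i$ down to $X$ is safe: every state in $X$ already lies on some accepting path for $Q$ by the valid-pair property, and by Lemma~\ref{lem:separator}(ii) any accepting path crossing between $\AO$ and $\AI$ must pass through one of the two boundary states, so restricting to $X$ merely commits to a crossing that is guaranteed to complete. An induction on $i$ then shows that all pairs in $V$ lie on a common accepting path $P$ and that $V$ records every visit of $P$ to a boundary state. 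This step again uses $O(nm)$ time and $O(n+m)$ space.

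For Step~3, I would follow Algorithm~\ref{algo:label}. The cases with $|X_i|=|X_{i+1}|=1$ are immediate from which boundary state is entered and left. When either $X_i$ or $X_{i+1}$ contains both boundary states I test whether $q_i$ is accepted in $\AI$ or as a $\phi_{\AI}$-to-$\theta_{\AI}$ path in $\AO$ via ordinary state-set transitions, and if only one succeeds the label is forced and correctness follows from the guarantees on $V$. The main obstacle is the ambiguous subcase where both subTNFAs accept $q_i$: here the cyclic $\epsilon$-dependencies between $\AO$ and $\AI$ mean that either labeling could in principle extend to an accepting path, and I must verify that the tie-breaking rule (inner when an $\epsilon$-path from $\phi_{\AI}$ to $\theta_{\AI}$ exists in $\AO$, outer otherwise) is always realizable. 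The argument is to splice the alternative choice through the guaranteed $\epsilon$-path without disturbing earlier or later blocks; this splice is legal precisely because Lemma~\ref{lem:separator}(ii) forces $\AO$ and $\AI$ to interact only at $\theta_{\AI}$ and $\phi_{\AI}$.

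Finally, merging adjacent blocks with identical labels produces an alternating inner/outer sequence; such an alternation exists because the pairs in $V$ all intersect a single accepting path whose traversal pattern the labels recover. The state-set tests in Step~3 touch each character of $Q$ only a constant number of times at $O(m)$ cost per step, so Step~3 is also $O(nm)$ time and $O(n+m)$ space. Summing the three steps yields the claimed $O(nm)$ time and $O(n+m)$ space bounds.
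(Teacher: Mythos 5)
Your proposal follows the paper's own proof essentially step for step: the same forward/backward sweeps and intersection to get the match sets, the same collapse-to-$X$ simulation with the same inductive justification for building $V$, and the same case analysis with the $\epsilon$-path splicing argument for the ambiguous labels, followed by merging equal-labeled blocks. The complexity accounting also matches, so there is nothing to add.
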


\section{Computing Accepting Paths}\label{sec:acceptingpaths}
Let $Q$ be a string of length $n$ accepted by a TNFA $A$ with $m$ states. We now show how to compute an accepting path for $Q$ in $A$ in $O(nm)$ time and $O(n + m)$ states.

Since an accepting path may have length $\Omega(nm)$ (there may be $\Omega(m)$ $\epsilon$-transitions between each character transition) we cannot afford to explicitly compute the path in our later $o(nm)$ time algorithms. Instead, our algorithm will compute  \emph{compressed paths} of size $O(n)$ obtained by deleting all $\epsilon$-transitions from a path. Note that the compressed path stores precisely the information needed to solve the parsing problem. 

\paragraph*{Computing Compressed Paths}
\subparagraph*{Algorithm $\Path(A,Q)$}  We define a recursive algorithm $\Path(A,Q)$ that takes a TNFA $A$ and a string $Q$ and returns a compressed accepting path in $A$ matching $Q$ as follows. If $n < \gamma_n$ or $m < \gamma_m$, for constants $\gamma_n, \gamma_m > 0$ that we will fix later, we simply run the naive algorithm that stores all state-sets during state-set simulation from left-to-right in $Q$. Since one of $n$ or $m$ is constant this uses $O(nm) = O(n + m)$ time and space. Otherwise, we proceed according to following steps.

\subparagraph*{Step 1: Decompose} We compute a decomposition of $A$ into inner and outer subTNFAs $\AI$ and $\AO$ and compute a corresponding string decomposition $\mathcal{Q} = \overline{q}_1, q_1, \overline{q}_2, q_2, \overline{q}_\ell, q_\ell, \overline{q}_{\ell+1}$ for $Q$.

\subparagraph*{Step 2: Recurse}
We build a single substring corresponding to all the subpaths in $\AO$ and $\ell$ substrings for $\AI$ (one for each subpath in $\AI)$ and recursively compute the compressed paths. To do so, construct $\overline{q} = \overline{q}_1\cdot \beta_{A_I} \cdot \overline{q}_2 \cdot \beta_{A_I} \cdots \beta_{A_I} \cdot \overline{q}_{\ell+1}$. Recall, that $\beta_{A_I}$ is the label of the special transition we added between $\theta_{A_I}$ and $\phi_{A_I}$in $A_O$.  %(see Fig.~\ref{fig:stringdecomp}). 
Then, compute the compressed paths 
\begin{align*}
\qquad \qquad \overline{p} &= \Path(\AO, \overline{q}) \\
p_i &= \Path(A_I, q_i) \qquad 1\leq i \leq \ell \\
\end{align*}

\subparagraph*{Step 3: Combine}
Finally, extract the subpaths $\overline{p}_1, \overline{p}_2, \ldots, \overline{p}_{\ell+1}$ from $\overline{p}$ corresponding to the substrings $\overline{q}_1, \overline{q}_2, \ldots,  \overline{q}_{\ell+1}$ and return the combined compressed path 
$$P = \overline{p}_0 \cdot p_1 \cdot \overline{p}_1 \cdot p_2 \cdot\overline{p}_2 \cdots p_\ell \cdot\overline{p}_{\ell+1}
$$

Inductively, it directly follows that the returned compressed path is a compressed accepting path for $Q$ in $A$.

\subsection{Analysis}\label{sec:analysis-rec}
We now show that the total time $T(n,m)$ of the algorithm is $O(nm)$. If $n < \gamma_n$ or $m < \gamma_m$, we run the backtracking algorithm using $O(nm) = O(n + m)$ time and space. If $n \geq \gamma_n$ and $m \geq \gamma_m$, we implement the recursive step of the algorithm using $O(nm)$ time. Let $n_i$ be the length of the inner string $q_i$ in the string decomposition and let $n_0 = \sum_{i=1}^{\ell+1}|\bar{q}_i|$. Thus, $n= \sum_{i=1}^{\ell+1}n_i$ and $|\bar{q}|=n_0+\ell$. In step 2, the recursive call to compute $\overline{p}$ takes $O(T(n_0 +\ell,\frac{2}{3}m+8))$ time and the recursive calls to compute $p_1, \ldots, p_\ell$ takes $\sum_{i=1}^\ell T(n_i, \frac{2}{3}m+8)$. The remaining steps of the algorithm all takes $O(nm)$ time. Hence, we have the following recurrence for $T(n,m)$.
$$
T(n,m) = 
\begin{cases}
\sum_{i=1}^\ell T(n_i, \frac{2}{3}m+8)+ T(n_0 +\ell,\frac{2}{3}m+8) + O(mn) & m \geq \gamma_m \textrm{ and } n\geq \gamma_n  \\
 O(m+n) & m < \gamma_m  
 \textrm{ or }  n  <  \gamma_n % m = \Theta(x), n = \Theta(y)
\end{cases}
$$
It follows that $T(n,m)=O(nm)$ for $\gamma_n=2$ and $\gamma_m=25$. See Appendix~\ref{sec:recurrenceproof} for a detailed proof. 

Next, we consider the space complexity. First, note that the total space for storing $R$ and $Q$ is $O(n + m)$. To analyse the space during the recursive calls of the algorithm, consider the recursion tree $\RT$ for $\Path(A, Q)$. For a node $v$ in $\RT$, we define $Q_v$ of length $n_v$ to be the string and $A_v$ with $m_v$ states to be the TNFA at $v$. 
Consider a path $v_1, \ldots, v_j$ of nodes in $\RT$ from the root to leaf $v_j$ corresponding to a sequence of nested recursive calls of the algorithm. If we naively store the subTNFAs, the string decompositions, and the compressed paths, we use $O(n_{v_i} + m_{v_i})$ space at each node $v_i$, $1 \leq i \leq j$. By Lemma~\ref{lem:separator}(i) the sum of the sizes of the subTNFAs on a path forms a geometrically decreasing sequence and hence the total space for the subTNFAs is  $\sum_{i=1}^j m_{v_i}=O(m)$. However, since each string (and hence compressed path) at each node $v_i$, $1 \leq i \leq j$, may have length $\Omega(n)$ we may need $\Omega(n \log m)$ space in total for these. We show how to improve this to $O(n + m)$ space in the next section.

\subsection{Squeezing into Linear Space}
We now show how to improve the space to $O(n + m)$. To do so we show how to carefully implement the recursion to only store the strings for a selected subset of the nodes along any root to leaf path in $\RT$ that in total take no more than $O(n)$ space. 

First, consider a node $v$ in $\RT$ and the corresponding string $Q_v$ and TNFA $A_v$. Define $\chi^Q_v$ to be the function that maps each character position in $Q_v$ (ignoring $\beta_{\AI}$ transitions) to the unique corresponding character in $Q$ and $\chi^A_v$ to be the function that maps each character transition (non-$\epsilon$ transition) in $A_v$ to the unique character transition in $A$. Note that these are well-defined by the construction of subproblems in the algorithm. At a node $v$, we represent $\chi^Q_v$ by storing for each character in $Q_v$ a pointer to the corresponding character in $Q$. Similarly, we represent $\chi^A_v$ by storing for each character transition in $A_v$ a pointer to the corresponding character transition in $A$. This uses $O(n_v + m_v)$ additional space. It is straightforward to compute these mappings during the algorithm directly from the string and TNFA decomposition in the same time. With the mappings we can now output transitions on the compressed path as pairs of position in $Q$ and transitions in $A$ immediately as they are computed in a leaf of $\RT$. Thus, when we have traversed a full subtree at a node we can free the space for that node since we do not have to wait to return to the root to combine the pieces of the subpath with other subpaths. 

We combine the mappings with an ordering of the recursion according to a \emph{heavy-path decomposition} of $\RT$. Let $v$ be an internal node in $\RT$ and define the \emph{heavy child} of $v$ to be a child of $v$ of maximum string length among the children of $v$. The remaining children are \emph{light children} of $v$. We have the following key property of light children. 

\begin{lemma}\label{lem:heavypath}
	For any node $v$ with light child $u$ in $\RT$, we have that $n_u \leq \frac{3}{4} n_v + O(1)$. 
\end{lemma}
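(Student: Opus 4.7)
The plan is to bound $n_u$ by a careful count of the children of $v$ in $\RT$ together with the extra cost incurred by the $\beta_{\AI}$ separators inserted into the outer subproblem.

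First, I would read off the children of $v$ from Step~2 of the recursive algorithm. Node $v$ has $\ell+1$ children: one outer child whose string is $\overline{q} = \overline{q}_1 \cdot \beta_{\AI} \cdot \overline{q}_2 \cdots \beta_{\AI} \cdot \overline{q}_{\ell+1}$, of length $n_0 + \ell$ where $n_0 = \sum_{i=1}^{\ell+1} |\overline{q}_i|$, together with $\ell$ inner children on $q_1,\ldots,q_\ell$ of lengths $n_1,\ldots,n_\ell$. Because the only additions relative to $Q_v$ are the $\ell$ inserted $\beta_{\AI}$ symbols, the total string length summed over all children equals $n_0 + \ell + \sum_{i=1}^\ell n_i = n_v + \ell$.

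Second, I would establish the bound $\ell \leq (n_v+1)/2$. After the merging step in Step~3 of the string decomposition, the sequence $\overline{q}_1, q_1, \overline{q}_2, \ldots, \overline{q}_\ell, q_\ell, \overline{q}_{\ell+1}$ strictly alternates outer and inner. Each of the $2\ell-1$ middle blocks (all except $\overline{q}_1$ and $\overline{q}_{\ell+1}$) must contain at least one partition block $q_j = Q_v[i_j+1, i_{j+1}]$ with $1 \leq j \leq k$ coming from the valid-pair sequence $V$; such a block has length $i_{j+1}-i_j \geq 1$ since the positions of $V$ are strictly increasing. Hence each middle block has length at least $1$, and summing their lengths gives $n_v \geq 2\ell - 1$, which rearranges to $\ell \leq (n_v+1)/2$.

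Finally, since $u$ is a light child we have $n_u \leq n_{\mathrm{heavy}}$ by definition, and the remaining children contribute non-negatively, so $n_u + n_{\mathrm{heavy}} \leq n_v + \ell$, giving $2n_u \leq n_v + \ell$. Combining with the bound on $\ell$ yields $n_u \leq (n_v+\ell)/2 \leq (3n_v+1)/4 = \tfrac{3}{4} n_v + O(1)$, as required.

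I expect the main obstacle to be the second step. The naive bound $n_u \leq n_{\mathrm{heavy}} \leq (n_v+\ell)/2$ is useless on its own, because a priori $\ell$ could be as large as $n_v$ and the ratio would exceed $\tfrac{3}{4}$. The subtlety is to exploit the alternation enforced by the merging step to argue that every interior block consumes at least one genuine character of $Q_v$, which is what bounds $\ell$ linearly in $n_v$ and closes the gap to $\tfrac{3}{4}$.
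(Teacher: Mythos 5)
Your proof is correct and follows essentially the same route as the paper's: total child string length $n_v+\ell$, the bound $\ell \leq (n_v+1)/2$ from the fact that the alternating decomposition forces every interior block to consume at least one character of $Q_v$, and the light-child bound $n_u \leq (n_v+\ell)/2$. Your second step just spells out in more detail the paper's one-line observation that the $\ell$ non-empty inner substrings are separated by non-empty outer substrings.
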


\begin{proof}
Let $v$ be a node in $\RT$ with $\ell+1$ children. The total string length of the children of $v$ is $n_v+\ell$ and a light child of $v$ can have string length at most $(n_v + \ell)/2$. Since the $\ell$ inner strings are disjoint non-empty substrings of $Q$ separated by at least one character (the non-empty outer substrings), we have that $\ell \leq (n_v+1)/2$. Hence, a light child $u$ can have string length at most $n_u \leq \frac{n_v + \ell}{2} < \frac{3}{4} n_v + 1$.
\end{proof}

We order the recursive calls at each node $v$ as follows. First, we recursively visit all the light children of $v$ and upon completing each recursive call, we free the space for that node. Note that the mappings allow us to do this. We then construct the subproblem for the heavy child of $v$, free the space for $v$, and continue the recursion at the heavy child. 

To analyse the space of the modified algorithm, consider a path a path $v_1, \ldots, v_j$ of nodes in $\RT$ from the root to a leaf $v_j$. We now have that only nodes $v_i$, $1 \leq i < j$ will explicitly store a string if $v_{i+1}$ is a light child of $v_i$. By Lemma~\ref{lem:heavypath} the sum of these lengths form a geometrically decreasing sequence and hence the total space is now $O(n)$. In summary, we have shown the following result.

\begin{theorem}
	Given a TNFA with $m$ states and a string of length $n$, we can compute a compressed accepting path for $Q$ in $A$ in $O(nm)$ time and $O(n+m)$ space. 
\end{theorem}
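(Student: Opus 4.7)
The plan is to verify that the recursive procedure $\Path(A,Q)$ from Section~\ref{sec:acceptingpaths}, once ordered by a heavy-path traversal of $\RT$ and augmented with the mappings $\chi^Q_v$ and $\chi^A_v$, meets both stated bounds. Correctness I would handle first, by induction on $m$: the base case $n<\gamma_n$ or $m<\gamma_m$ is immediate from the backtracking state-set simulation, and in the recursive case Lemma~\ref{lem:stringdecomposition} produces a string decomposition whose inner and outer substrings are induced by a single accepting path of $A$, so concatenating the compressed accepting paths returned by the recursive calls on $\AO$ and on each $q_i$ (via $\AI$) yields a compressed accepting path for $Q$ in $A$.

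For the time bound I would reuse the recurrence already displayed in Section~\ref{sec:analysis-rec}, namely
$$T(n,m) \;=\; \sum_{i=1}^{\ell} T\!\bigl(n_i,\tfrac{2}{3}m+8\bigr) + T\!\bigl(n_0+\ell,\tfrac{2}{3}m+8\bigr) + O(nm),$$
noting that $\sum_i n_i = n$ and that the $\beta_{\AI}$ symbols inserted into $\overline{q}$ contribute at most $\ell\le n$ extra characters. A routine induction with $\gamma_n=2$ and $\gamma_m=25$ then gives $T(n,m)=O(nm)$; this is the non-delicate part of the argument.

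The real content is the space analysis. I would fix an arbitrary root-to-leaf path $v_1,\ldots,v_j$ in $\RT$ and observe that under the heavy-path ordering the only live recursive frames at any moment are those along such a path, since each light subtree is fully processed and freed before the algorithm descends into the heavy child. At a live frame $v_i$ the working data consists of $A_{v_i}$ together with the mappings (total $O(m_{v_i})$) and, possibly, the string $Q_{v_i}$. The string needs to survive only if $v_{i+1}$ is a light child of $v_i$: otherwise the heavy child's subproblem is built in place and $Q_{v_i}$ is discarded before descending. Lemma~\ref{lem:separator}(i) makes the $m_{v_i}$ geometrically decreasing and so $\sum_i m_{v_i}=O(m)$, while Lemma~\ref{lem:heavypath} makes the string lengths at successive light descents geometrically decreasing and so the retained strings sum to $O(n)$.

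The main obstacle, and the reason the character mappings $\chi^Q_v$ and $\chi^A_v$ are essential, is that the algorithm cannot afford to bubble compressed paths back up the recursion: keeping a compressed path at every level would reintroduce an $\Omega(n\log m)$ term. The plan is therefore to emit output at the leaves: when the base-case backtracking produces a compressed path, each of its character transitions is immediately written to the global output as a pair (position in $Q$, character transition in $A$) obtained by composing the mappings along the active stack. Since the mappings are built incrementally when the subproblems are constructed, this composition is $O(1)$ per emitted transition, the parse is emitted in left-to-right order along the accepting path, and no frame ever needs to remember a compressed path after its subtree finishes. Combining this with the geometric sums above gives the $O(n+m)$ space bound, completing the theorem.
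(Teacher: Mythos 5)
Your proposal is correct and follows essentially the same route as the paper: the same recurrence for the time bound, the same use of Lemma~\ref{lem:separator}(i) and Lemma~\ref{lem:heavypath} for the geometric decay of automaton sizes and retained string lengths along a root-to-leaf path of $\RT$, and the same device of heavy-path ordering plus the mappings $\chi^Q_v$, $\chi^A_v$ to emit the parse at the leaves rather than returning compressed paths up the recursion. The only cosmetic difference is that the paper stores each node's mapping as direct pointers into the original $Q$ and $A$ (so no on-the-fly composition along the stack is needed), which is what your ``built incrementally'' remark amounts to.
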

Note that the algorithm works in a comparison-based, pointer model of computation. By our discussion this immediately implies the main result of Thm.~\ref{thm:main}.

%!TEX root = paper.tex
\section{Speeding up the Algorithm}
We now show how to adapt the algorithm to use the faster state-set simulations algorithms such as Myers' algorithm~\cite{Myers1992} and later variants~\cite{BFC2008,Bille06} that improve the $O(m)$ bound for a single state-set transition. These results and ours all work on a unit-cost word RAM model of computation with $w$-bit words and a standard instruction set including addition, bitwise boolean operations, shifts, and multiplication. We can store a pointer to any position in the input and hence $w \geq \log (n + m)$. For simplicity, we will show how to adapt the tabulation-based algorithm of Bille and Farach-Colton~\cite{BFC2008}.

\subsection{Fast Matching}
Let $A$ be a TNFA with $m$ states and let $Q$ be a string of length $n$. Assume first that the alphabet is constant. We briefly describe the main features of the algorithm by Bille and Farach-Colton~\cite{BFC2008} that solves the matching problem in $O(nm/\log n + n + m)$ time and $O(n^{\varepsilon} + m)$ space, for any constant $\varepsilon > 0$. In the next section we show how to adapt the algorithm to compute an accepting path.

Given a parameter $t < w$, we will construct a global table of size $2^{ct} < 2^w$, for a constant $c$, to speed up the state-set transition. We decompose $A$ into a tree $MS$ of $O(\ceil{m/t})$ micro TNFAs, each with at most $t$ states. For each $M \in MS$, each child $C$ is represented by its start and accepting state and a pseudo-transition connecting these. By similar arguments as in Lemma~\ref{lem:separator} we can always construct such a decomposition.

We represent each micro TNFA $M \in MS$ uniquely by its parse tree using $O(t)$ bits. Since $M$ has at most $t$ states, we can represent the state-set for $M$, called the \emph{local state-set} and denoted $S_M$, using $t$ bits. Hence, we can make a universal table of size $2^{O(t)}$ that for every possible micro TNFA $M$ of size $\leq t$, local state-set $S_M$, and character $\alpha \in \Sigma \cup \{\epsilon\}$ computes the state-set transition $\delta_M(S_M, \alpha)$ in constant time.

We use the tabulation to efficiently implement a global state-set transition on $A$ as follows. We represent the state-set for $A$ as the union of the local state-sets in $MS$. Note that parents and children in $MS$ share some local states, and these states will be copied in the local state-sets.

To implement a state-set transition on $A$ for a character $\alpha$, we first traverse all transitions labeled $\alpha$ in each micro TNFA from the current state-set. We then follow paths of $\epsilon$ transition in two depth-first left-to-right traversal of $MS$. At each micro TNFA $M$, we compute all states reachable via $\epsilon$-transitions and propagate the shared states among parents and children in $MS$. Since any cycle free path in a TNFA contains at most one \emph{back transition} (see  ~\cite[Lemma 1]{Myers1992}) it follows that two such traversals suffices to to correctly compute all states in $A$ reachable via $\epsilon$-transitions. 

With the universal table, we process each micro TNFA in constant time, leading to an algorithm using $O(|MS|/t + n + m) = O(nm/t + n + m)$ time and $O(2^t + m)$ space. Setting $t = \varepsilon \log n$ produces the stated result. Note that each state-set uses $O(\ceil{m/t})$ space. To handle general alphabets, we store dictionaries for each micro TNFA with bit masks corresponding to characters appearing in the TNFA and combine these with an additional masking step in state-set transition. The leads a general solution with the same time and space bounds as above.\footnote{Note that the time bound in the original paper has an additional $m \log m$ term~\cite{BFC2008}. Using atomic heaps~\cite{fredmanwillardsp-n-mst} to represent dictionaries for micro TNFAs this term is straightforward to improve to $O(m)$. See also Bille and Thorup~\cite[Appendix A]{BT2009}.} 

\subsection{Fast Parsing}
We now show how to modify our algorithm from Sec.~\ref{sec:acceptingpaths} to take advantage of the fast state-set transition algorithm.  Let $t < w$ be the desired speed up as above. We have the following two cases. 

If $n \geq t$ and $m \geq t$ we implement the recursive step of the algorithm but replace all state-set transitions, that is when we compute the match sets and valid pairs, by the fast state-set transition algorithm. To compute the suffix match sets we need to compute fast state-set transitions on $A$ with the direction of all transitions reversed. To do so, we make a new universal table of size $2^{O(t)}$ for the micro TNFAs with the direction of transitions reversed.  We traverse the tree of micro TNFAs with two depth traversals as before except that we now traverse children in right to left order to correctly compute all reachable states. It follows that this uses $O(nm/t)$ time.

Otherwise ($n < t$ or $m < t$), we use backtracking to compute the accepting path as follows. First, we process $Q$ from left-to-right using fast state-set transitions to compute the sets $S_0, \ldots, S_n$ of states reachable via paths from $\theta_A$ for each prefix of $Q$. We store each of these state-sets. This uses $O(nm/t + n + m) = O(n + m)$ time and space. Then, we process $Q$ from right-to-left to recover a compressed accepting path in $A$. Starting from $\phi_A$ we repeatedly do a fast state-set transition $A$ with the direction of transition reversed, compute the intersection of the resulting state-set with the corresponding state-set from the first step, and update the state-set to a state in the intersection. We can compute the intersection of state-sets and the update in $O(m/t)$ time using standard bit wise operations. We do the state-set transitions on the TNFA with directions of transitions reversed as above. In total, this uses $O(nm/t + n + m) = O(n + m)$ time. 

In summary, we have the following recurrence for the time $T(n,m)$. 
\begin{equation*}
T(n,m) = 
\begin{cases}
	\sum_{i=0}^\ell T(n_i, \frac{2}{3}m + 8) + T(n_0 + \ell, \frac{2}{3}m + 8) +  O\left(\frac{nm}{t}\right) & \text{$n \geq t$ and $m \geq t$} \\
	O\left(n + m \right) & \text{$m < t$ or $n < t$}
\end{cases}
\end{equation*}
It's straightforward to show that $T(n,m) = O(nm/t + n + m)$, for $25 \leq t < w$. The space is linear as before. Plugging in $t = \epsilon \log n$ and including the preprocessing time and space for the universal tables we obtain the following logarithmic improvement of Theorem~\ref{thm:main}. 

\begin{theorem}
  	Given a regular expression of length $m$, a string of length $n$, we can solve the regular expression parsing problem in $O(nm/\log n + n + m)$ time and $O(n+m)$ space. 
\end{theorem}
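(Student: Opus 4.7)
The plan is to reuse the recursive algorithm $\Path(A,Q)$ from Section~\ref{sec:acceptingpaths} essentially verbatim, but to replace every state-set transition it performs by the tabulation-based fast transition of Bille and Farach-Colton, and to replace the leaf-case backtracking by its fast analogue. Concretely, I would fix the parameter $t=\varepsilon\log n$ (with $t<w$), preprocess once a universal table of size $2^{O(t)}$ for the micro TNFA decomposition of the top-level TNFA $A$, and simultaneously preprocess a mirrored table for the reversed TNFA so that suffix match sets can also be computed in $O(nm/t)$ time. This preprocessing costs $O(2^t+m)=O(n^{\varepsilon}+m)$ time and space, subsumed by the final bound.

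Next, I would revisit the three-step string-decomposition procedure of Lemma~\ref{lem:stringdecomposition} and observe that every expensive operation in it (computing $\Prefix$, $\Suffix$, the valid-pair sequence $V$, and the state-set transitions inside Algorithm~\ref{algo:label}) is a sequence of state-set transitions on $A$ or on its reversal. Each such transition runs in $O(m/t)$ time using the universal tables, so the full decomposition cost at a node with string length $n_v$ and automaton size $m_v$ drops from $O(n_v m_v)$ to $O(n_v m_v/t)$. For the base case of the recursion, which I would now enter whenever $n<t$ or $m<t$, I would use the fast version of the backtracking scheme sketched in the previous subsection: compute and store $S_0,\dots,S_n$ by forward fast transitions, then recover a compressed accepting path by reverse fast transitions intersected with the stored prefix sets, all in $O(nm/t+n+m)=O(n+m)$ time and space since $\min(n,m)<t=O(\log n)$.

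With these ingredients in place, the recurrence governing the running time is exactly the one displayed in the excerpt. I would verify $T(n,m)=O(nm/t+n+m)$ for $25\le t<w$ by induction on $m$, mirroring the induction used for the $O(nm)$ bound in Section~\ref{sec:analysis-rec}: the geometric $(2/3)$-shrinking of $m$ through the recursion absorbs the additive $O(nm/t)$ at each level and the $+8$ slack using the threshold $t\ge 25$, while the partition identities $n=\sum n_i$ and $|\bar q|=n_0+\ell$ ensure the linear-in-$n$ work sums correctly across siblings. Plugging in $t=\varepsilon\log n$ then gives the claimed $O(nm/\log n+n+m)$ time.

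For the space bound, I would carry over the heavy-path reorganization of Section~4.2 unchanged. The character-position mappings $\chi^Q_v$ and $\chi^A_v$ still allow compressed-path transitions to be emitted immediately at the leaves, so only light children along any root-to-leaf path in $\RT$ need to keep their strings alive; Lemma~\ref{lem:heavypath} still applies (it depends only on the string decomposition, not on how transitions are simulated) and bounds the total stored string length by $O(n)$. The TNFA sizes along any path still form a geometric sequence summing to $O(m)$, the universal tables add $O(2^t)=O(n^{\varepsilon})$, and each active node carries only a constant number of $O(m/t)$-bit state-sets, which telescope to $O(m)$ along a root-to-leaf path. Hence the total space remains $O(n+m)$. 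I expect the only delicate point in writing this up cleanly to be the reversed-direction tabulation: one must argue that the two depth-first passes over the micro-TNFA tree, now in right-to-left order, still capture every $\epsilon$-reachable state because any cycle-free path in a TNFA uses at most one back transition, exactly as in the forward case.
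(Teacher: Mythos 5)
Your proposal matches the paper's own argument essentially step for step: replace every state-set transition in the recursive decomposition by the tabulated $O(m/t)$-time transition (with a second universal table for the reversed TNFA, traversed right-to-left), switch the base case at $n<t$ or $m<t$ to the fast forward-then-backward backtracking, solve the same recurrence for $25\le t<w$, and carry over the heavy-path space analysis unchanged. This is correct and is the same route the paper takes, including the observation that the reversed two-pass traversal is the one point needing care.
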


Other fast state-set transition algorithms~\cite{Myers1992, Bille06} are straightforward to adapt using the above framework. The key requirement (satisfied by the existing solution) is that the algorithms need to efficiently support state-set transitions on the TNFA with the directions of the transitions reversed and intersection between two state-sets.

An algorithm that does not appear to work within the above framework is the $O(\frac{nm\log \log n}{\log^{3/2} n})$ matching algorithm by Bille and Thorup~\cite{BT2009}. This algorithm does not produce state-sets at each character in the string, but instead  maintains specialized information to derive state-sets every $\sqrt{\log n}$ characters,  which we do not see how to maintain in our parsing framework.

\bibliography{paper}
\appendix
\section{Proof of Recurrence}\label{sec:recurrenceproof} 

In this section we formally prove that the recurrence in Section~\ref{sec:analysis-rec} is $O(nm)$ for  $\gamma_n=2$ and $\gamma_m=25$. For simplicity, we have not attempted to minimize the constants. Recall the recurrence for $T(n,m)$: 
$$
T(n,m) = 
\begin{cases}
\sum_{i=1}^\ell T(n_i, \frac{2}{3}m+8)+ T(n_0 +\ell,\frac{2}{3}m+8) + O(mn) & m \geq \gamma_m \textrm{ and } n\geq \gamma_n  \\
 O(m+n) & m < \gamma_m  
 \textrm{ or }  n  <  \gamma_n 
\end{cases}
$$

We will show that  $T(n,m) \leq acmn - (a-1)cm +c$ for constants $a \geq 1$, $c \geq 1$ using induction. First consider the base case ($m < \gamma_m$ or $n<\gamma_n$). Since $mn + 1 \geq m+n$ and $T(n,m) = c(m+n)$ we have $T(n,m) \leq c(mn + 1) = cmn -cm + c(m+1)$ and it follows that $T(n,m) \leq acmn - (a-1)cm +c$. For the induction step ($m \geq \gamma_m$ and $n\geq \gamma_n$) we have 

\begin{eqnarray}
T(n,m) &=& \sum_{i=1}^\ell T(n_i, \frac{2}{3}m+8)+ T(n_0 +\ell,\frac{2}{3}m+8) + O(mn) \nonumber \\
&\leq& \sum_{i=1}^\ell \left(ac n_i( \frac{2}{3}m+8) - (a-1)c(\frac{2}{3}m+8)+c\right)  \nonumber \\
&& \qquad + \left(ac(n_0 +\ell)(\frac{2}{3}m+8)
 - (a-1)c(\frac{2}{3}m+8) + c\right) + cmn \nonumber \\
&=  & act\left(\frac{2}{3}m+8\right)\sum_{i=1}^\ell n_i  + ac(n_0 +\ell)(\frac{2}{3}m+8) \nonumber \\
&&\qquad  - (\ell+1)(a-1)c(\frac{2}{3}m+8)+(\ell+1) c  +cmn \nonumber \\
&\leq& act(\frac{2}{3}m+\frac{8}{25}m)\sum_{i=0}^\ell n_i + ac\ell (\frac{2}{3}m+\frac{8}{25}m)  \nonumber\\
&& \qquad - (\ell+1)(a-1)c(\frac{2}{3}m+\frac{8}{25}m)+(\ell+1) c  + cmn \label{eq:eq1} \\ %\quad //  m\geq 25 \\
&\leq& ac\frac{74}{75}mn+ac\ell\frac{74}{75}m - ac(\ell+1)\frac{74}{75}m  +(\ell+1)c\frac{74}{75}m+ (\ell+1) c + cmn \label{eq:eq2} \\ %//  m\geq 25 \\
%&=& ac\frac{74}{75}mn  - ac\frac{74}{75}m+ (\ell+1)c(\frac{74}{75}m+1)+cmn\\
%&\leq& ac\frac{74}{75}mn  - ac\frac{74}{75}m+ (\ell+1)c(\frac{74}{75}m+\frac{1}{25}m)+cmn\\
%&\leq& ac\frac{74}{75}mn  - ac\frac{74}{75}m+ (\ell+1)c\frac{77}{75}m+cmn\\
&\leq& ac\frac{74}{75}mn  - ac\frac{74}{75}m+ (n/2+2)c\frac{77}{75}m+cmn \label{eq:eq3} \\%//n\leq (n+1)/2\\ 
&\leq& acmn - acm +c(m+1) \label{eq:eq4}
\end{eqnarray}
Here, \eqref{eq:eq1} and \eqref{eq:eq2} follows since $m \geq 25$, \eqref{eq:eq3} follows since $\ell \leq (n+1)/2$, and \eqref{eq:eq4} holds for sufficiently large $a$.

\end{document}